\newtheorem{definition}{Definition}[section]
\newtheorem{theorem}{Theorem}[section]
\newtheorem{lemma}[theorem]{Lemma}
\theoremstyle{remark}
\numberwithin{equation}{section}
\begin{document}
%\firstpage{1}

\title[EMDUnifrac]{EMDUnifrac: Exact linear time computation of the Unifrac metric and identification of differentially abundant organisms}
%\author[Koslicki \& McClelland]{David Koslicki\,$^{1}$ \footnote{to whom correspondence should be addressed}, Jason McClelland\,$^{1}$}
\author[McClelland \& Koslicki]{Jason McClelland\,$^{1}{}^*$, David Koslicki\,$^{1}$}
\thanks{${}^*$\url{mcclellj@science.oregonstate.edu}}
\address{$^{1}$Mathematics Department, Oregon State University, Corvallis OR}

\maketitle

\begin{abstract}

%\section{Summary:}
Both the weighted and unweighted Unifrac distances have been very successfully employed to assess if two communities differ, but do not give any information about \textit{how} two communities differ. We take advantage of recent observations that the Unifrac metric is equivalent to the so-called \textit{earth mover's distance} (also known as the Kantorovich-Rubinstein metric) to develop an algorithm that not only computes the Unifrac distance in linear time and space, but also simultaneously finds which operational taxonomic units are responsible for the observed differences between samples. This allows the algorithm, called EMDUnifrac, to determine \textit{why} given samples are different, not just \textit{if} they are different, and with no added computational burden. EMDUnifrac can be utilized on any distribution on a tree, and so is particularly suitable to analyzing both operational taxonomic units derived from amplicon sequencing, as well as community profiles resulting from classifying whole genome shotgun metagenomes.
%\section{Availability and implementation:}
The EMDUnifrac source code (written in python) is freely available at: 
\url{https://github.com/dkoslicki/EMDUnifrac}.

%As well as via a Galaxy web server at: 

%\url{http://math-galaxy.cgrb.oregonstate.edu}.
%\section{Contact:} \href{david.koslicki@math.oregonstate.edu}{david.koslicki@math.oregonstate.edu}
\end{abstract}

\section{Introduction}
An important first step in comparative microbial ecology studies is the assessment of if and how two communities of microorganisms differ. Unifrac \cite{lozupone2007quantitative,lozupone2005unifrac,hamady2010fast}, in its various implementations, is a commonly utilized distance metric that quantifies if two communities do indeed differ. In the field of metagenomics, this phylogentic-aware distance has been used to effectively cluster many 16S rRNA samples and distinguish between them based on a given environmental factor \cite{ley2006ecological,frank2007molecular,rawls2006reciprocal}. However, a recognized disadvantage to the Unifrac distance is that it only quantifies \textit{if} two communities differ and gives no indication of \textit{how} they differ \cite{white2009statistical}. Typically, to answer the question of how two communities differ, further statistical or computational methods are employed \cite{white2009statistical,schloss2006introducing,wooley2010primer,parks2010identifying}. 

In this article, we demonstrate that in viewing the Unifrac distance as the so-called Kantorovich-Rubinstein metric (also known as the earth mover's distance \cite{rubner2000earth}), one can obtain exactly \textit{how} two communities differ and which operational taxonomic units (OTUs) or taxa are responsible for the observed Unifrac distance. This equivalence between the Unifrac distance and the earth mover's distance was demonstrated recently  \cite{evans2012phylogenetic} and while this equivalence greatly improved the understanding of the Unifrac distance, the authors of \cite{evans2012phylogenetic} were primarily concerned with assessing statistical significance of Unifrac distances and not with detailing how this view can be used to returning differentially abundant OTUs.

We begin first by detailing how using the earth mover's distance to compute the Unifrac distance can identify differentially abundant OTUs. We then introduce a linear time algorithm, called EMDUnifrac, that computes the Unifrac distance and also returns the differentially abundant OTUs that contributed to this distance. Finally, after demonstrating its usefulness on previously published biological data, we prove the correctness of EMDUnifrac and calculate its time and space complexity.

%While originally computationally intensive, a more recent implementation called FastUnifrac was developed that largely removed this burden \cite{hamady2010fast}. 

%Recently, it was demonstrated in \cite{evans2012phylogenetic} that Unifrac is equivalent to the Kantorovich-Rubinstein metric (also known as the first Wasserstein metric or, in computer science, as the earth mover's distance \cite{rubner2000earth}). While this equivalence greatly improved the understanding of the Unifrac distance, to our knowledge, an implementation of computing Unifrac via the ideas presented in \cite{evans2012phylogenetic} has not yet been carried out. We aim to resolve this here, and in the process find that computing the Unifrac distance via the earth movers distance (EMD) gives a very natural way of determining which parts of a given sample contribute to the differences described by Unifrac.  However, this pursuit is not without its computational burden, but we ameliorate these via a heuristic (but accurate) approximation to the EMD.
%In particular, the \textit{flow} (see section \ref{}) describes which parts of a phylogenetic tree contributed to the Unifrac distance
%We have more room here.

\section{Identifying differentially abundant OTUs}
To demonstrate how viewing the Unifrac distance as the earth mover's distance (EMD) identifies differentially abundant OTUs, we first need to define the EMD. We focus here on the weighted (normalized) Unifrac distance, as the unweighted Unifrac can be obtained by appropriately modifying the underlying distributions utilized.

\begin{figure}[!htpb]%figure1
\centerline{\includegraphics[scale=0.6,trim={1.5cm 2.5cm 2cm 2cm},clip]{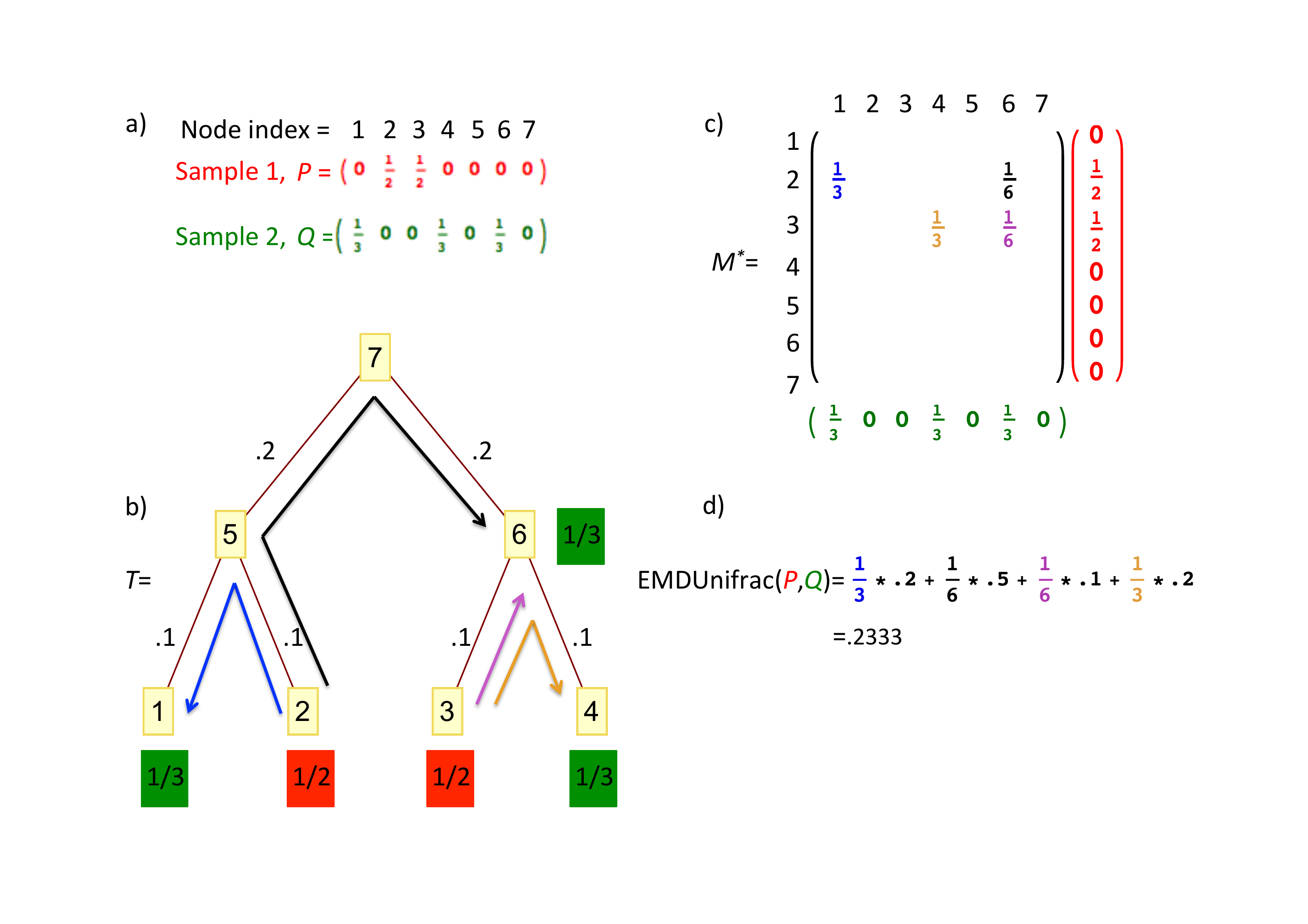}}
\caption{Visualization of computing the Unifrac distance using the earth mover's distance. The abundances of the individual samples are pictured in part a), indexed by the nodes of the tree $T$ which is pictured in part b) along with branch lengths. The minimizing flow $M^*$ is pictured in part c) and the colors of the entries of $M^*$ correspond to the colored arrows on the tree. Part d) contains the computation of ${\rm EMDUnifrac}(P,Q)$ with this minimizing flow. }\label{fig:Example}
\end{figure}

Given two sample communities and the associated abundances of microorganisms therein, we can associate to these a phylogenetic tree $T$ and two probability distributions $P$ and $Q$ that represent the fraction of a given sample that appears at each node of the phylogenetic tree (not necessarily restricted to the leaves). As the phylogenetic tree $T$ has associated branch lengths, we can find the minimal distance between any two nodes of the tree. Let $D$ be the matrix of all pairwise distances between nodes in $T$. We use the notation $\Gamma(P,Q)$ to describe the space of all ways in which one community can be transformed into the other. The elements $M\in \Gamma(P,Q)$ are referred to as \textit{flows} and are matrices indexed by the nodes in the tree $T$ with the stipulation that the row sums of $M$ are equal to $P$ and the column sums of $M$ are equal to $Q$. The $(i,j)^{\rm th}$ entry of such an $M$ indicates that a total abundance of $M_{i,j}$ has been moved from node $i$ in the sample $P$ to node $j$ in sample $Q$. With these conventions, we can define the earth mover's distance on this tree, which we refer to herein as EMDUnifrac:
\begin{equation}
\label{eq:EMD}
{\rm EMDUnifrac}(P,Q) = \underset{M\in \Gamma(P,Q)}{\rm minimize} \sum_{i,j\in T} D_{i,j} M_{i,j}.
\end{equation}
Informally, the quantity ${\rm EMDUnifrac}(P,Q)$ represents the minimum amount of ``work'' required to transform the distribution of one sample $P$ into the distribution of the other sample $Q$ along the phylogenetic tree.

It has been previously shown, using different notation, that ${\rm EMDUnifrac}(P,Q)$ is equivalent to the weighted (normalized) Unifrac distance \cite{evans2012phylogenetic}. Equivalence can be shows for the unweighted Unifrac distance by modifying the distributions $P$ and $Q$ to be binary vectors on the same original support and redefining the space of all flows $\Gamma(P,Q)$. A toy example is given in Figure \ref{fig:Example} that details the previously defined quantities.

We concentrate on the flow $M^*$ that minimizes the right hand side of the expression in \eqref{eq:EMD} and call this the \textit{minimizing flow}. This matrix represents where the abundance of one sample was moved when it was being transformed into the other sample, and this quantity precisely describes \textit{how} the two samples differ and which OTUs contributed to the computed Unifrac value. For example, in Figure \ref{fig:Example}, the entry $M^*_{2,1}=\frac{1}{3}$ indicates that $1/3^{\rm rd}$ of the abundance of the first sample was moved from node 2 to node 1. A little care must be taken, though, as it is not guaranteed that there is one \textit{unique} minimizing flow. In all cases, the elements on the diagonal of any minimizing flow can be ignored (as this only indicates the abundances that were the same between the two samples). However, we can define a vector indexed by the edges of our phylogenetic tree called the \textit{differential abundance vector}, which is the same no matter which minimizing flow is chosen. Letting $E$ denote the edges of our phylogenetic tree, $T_e$ the nodes of the subtree below an edge $e \in E$ and $T_{e'}$ the remaining nodes of $T$, so that $T = T_e \cup T_{e'}$, we have that ${\rm DiffAbund}(e) = l(e)\sum_{i \in T_e}\sum_{j \in T_{e'}} M_{i,j}-M_{j,i}$. %This leads to the following biological interpretation ({\color{red} which is proved in XXX?}):
%\begin{quote}
%The vector ${\rm DiffAbund}_1$ indicates the organisms that are over-abundant in sample 1 and under-abundant in sample 2. Similarly, the vector ${\rm DiffAbund}_2$ indicates the organisms that are over-abundant in sample 2 and under-abundant in sample 1.
%%A nonzero entry $M^*_{i,j}$ for any minimizing flow indicates the differential abundance between the two samples of magnitude $M^*_{i,j}$ {\color{red} really?!} among descendants of the least common ancestor of tree nodes $i$ and $j$.
%\end{quote}
Normalizing this vector so its sum is 1 leads to the following biological interpretation:
%The differential abundance vectors can be normalized (multiplicatively) to sum to 1. This leads to the second biological interpretation:
\begin{quote}
The normalized differential abundance vectors indicate which taxa contributed to the Unifrac distance and by what percentage.
\end{quote}
%In practice, the minimzing flow may be unique, and the code implementing the algorithm in Section \ref{section:Algorithm} returns this information. In such a case, further biological interpretations are possible since the entries of the unique minimizing flow $M^*$ uniquely describe precisely how the two samples differ. 
%In all cases, the elements on the diagonal of any minimizing flow can be ignored (as this only indicates the abundances that were the same between the two samples).

For typical metagenomics and metatranscriptomic studies, the distributions $P$ and $Q$ are supported on the leaves of the tree $T$. In this case, minimizing flows and differential abundance vectors can be defined for all nodes, as well as at any fixed taxonomic rank simply by summing over the lower taxa. Figure \ref{fig:RealFlow} gives such an example at the phylum level.
%For example, in Figure \ref{fig:Example}, the entry $M^*_{2,1}=\frac{1}{3}$ indicates that there is a difference in abundance of $\frac{1}{6}$ between the two samples among descendants of the node ${\rm LCA}(1,2)=5$. Hence, it communicates that sample 1 has $M^*_{2,1}-P_2=\frac{1}{6}$ more abundance among descendants of node $5$ than does sample 2. Equivalently, {\color{red} fix up this example}.

\section{Application to real data}
To demonstrate the utility of EMDUnifrac on real data, we evaluate it on the 16S rRNA data from a previous study\cite{willing2010pyrosequencing}. This data consists of 454 pyrosequenced fecal samples from a cohort of 40 twin pairs. The RDPII \cite{maidak2001rdp} and BLAST \cite{altschul1990basic} classifications were accessed via QIIME/QIITA \cite{caporaso2010qiime}. For simplicity, we focus here on the phylum level, and so summed these classifications to this level. We selected a subset of the data consisting of 49 healthy samples and 16 ulcerative colitis samples and used the silva taxonomic tree\cite{yilmaz2013silva} for the EMDUnifrac computation.

We evaluated the EMDUnifrac algorithm on all 2,080 pairs of samples and performed a principle coordinate analysis (PCoA) on the resulting distance matrix (disregarding the flows for each pair). The result of this is contained in part (A) of Figure \ref{fig:RealFlow}. Next, we combined all the healthy samples and combined all the ulcerative colitis samples and evaluated EMDUnifrac on these two combined samples. The returned minimizing flow is depicted in part (B) of Figure \ref{fig:RealFlow}. The corresponding differential abundance vector is shown in part (C). Even though upon visual inspection, the PCoA plot in part (A) does not show much distinction between healthy and ulcerative colitis samples (compare to the similar plot contained in Figure 2 of \cite{willing2010pyrosequencing}), the differential abundance vector immediately leads to the conclusion that the ulcerative colitis samples are primarily enriched for Actinobacteria and Proteobacteria, while being deficient in Bacteroidetes. This is consistent with other studies where the same trend was observed in irritable bowel disease subjects, but using more intricate analysis techniques \cite{frank2007molecular,spor2011unravelling,manichanh2012gut}, and demonstrates how utilizing the minimizing flow results in more information than simply using a dimension reduction technique (here PCoA) on the pairwise Unifrac distances.

\begin{figure}
\def\tabularxcolumn#1{m{#1}}
\begin{tabularx}{\linewidth}{@{}cXX@{}}
\begin{tabular}{cc}
\subfloat[]{\includegraphics[scale=0.4,trim={.9cm 6.5cm 2cm 6.5cm},clip]{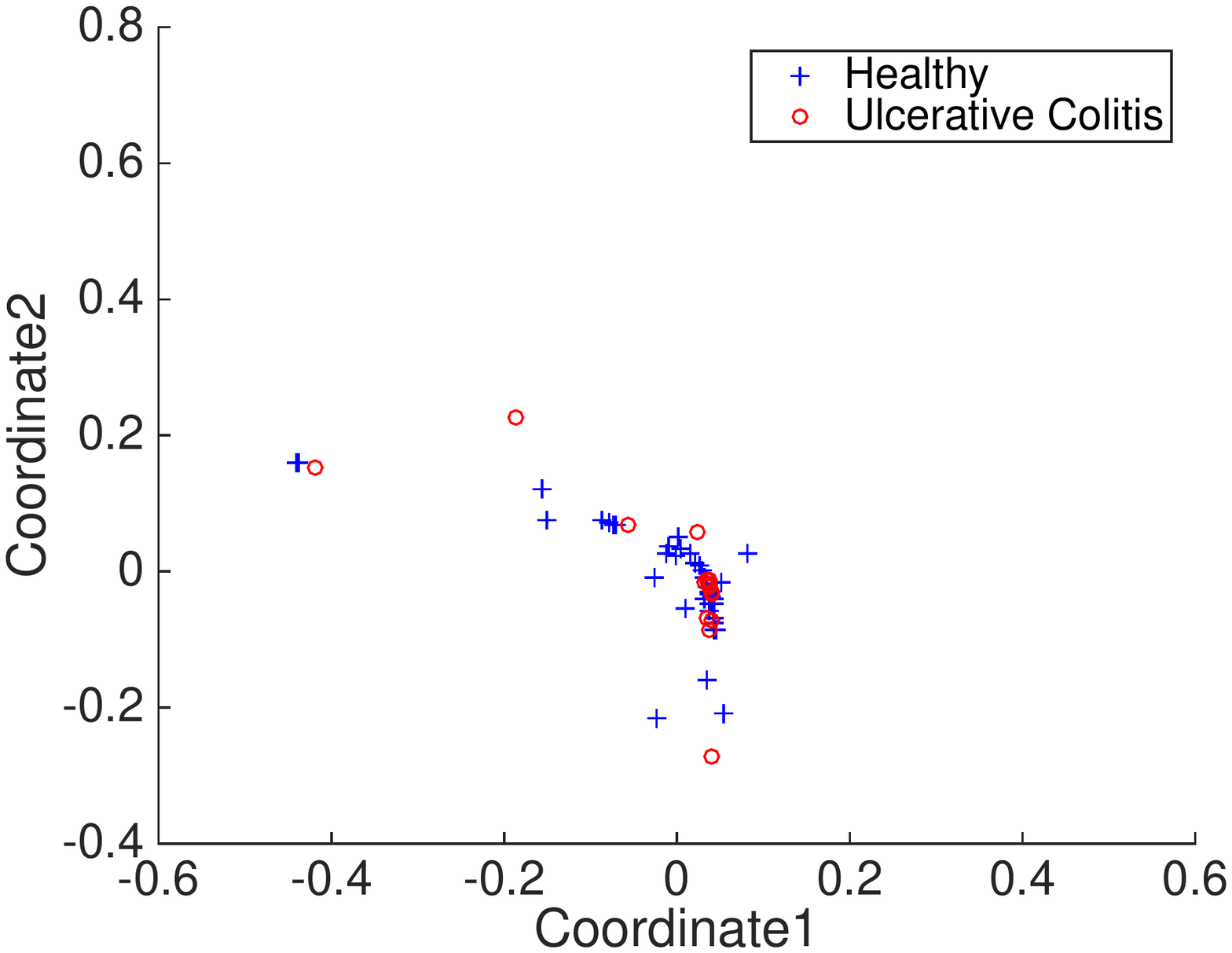}} 
   & \subfloat[]{\includegraphics[scale=0.4,trim={1cm 6.5cm 2cm 6.5cm},clip]{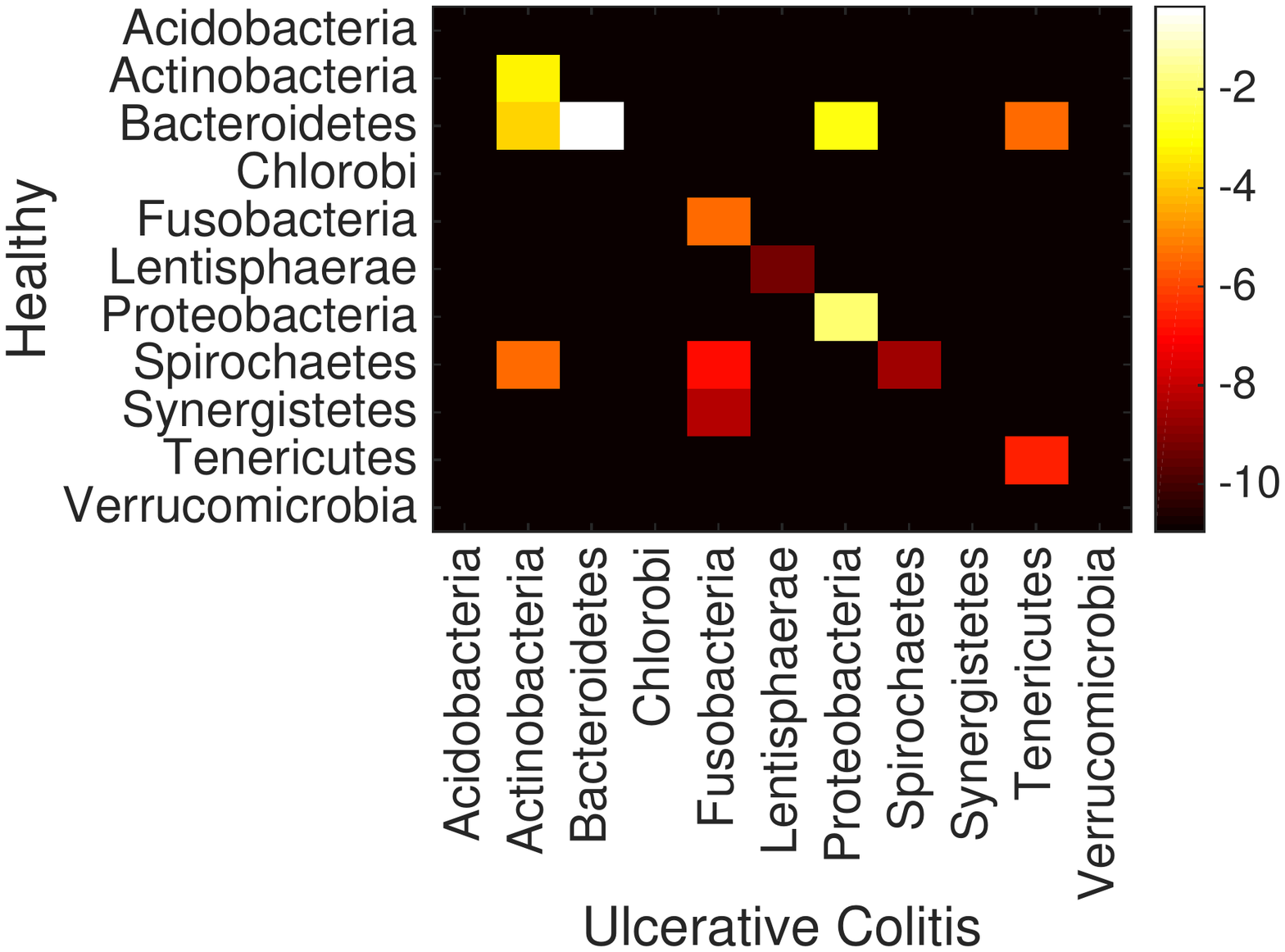}}\\
%\subfloat[]{\includegraphics[scale=0.4,trim={1cm 6.5cm 2cm 6.75cm},clip]{}} 
%   & \subfloat[]{\includegraphics[scale=0.4,trim={1cm 6.5cm 2cm 6.75cm},clip]{}}\\
\multicolumn{2}{c}{\subfloat[]{\includegraphics[width=4in,trim={0cm 0cm 0cm 0cm},clip]{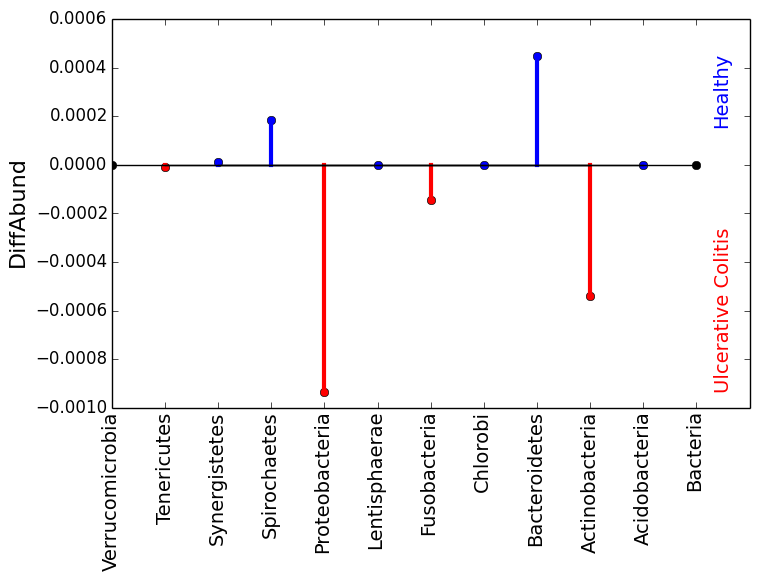}}}
\end{tabular}
\end{tabularx}
\caption{Results on the real data. Part (A) is the PCoA plot of the EMDUnifrac distance matrix between all pairs of samples analyzed. Compare to the similar plot in Figure 2 of \cite{willing2010pyrosequencing}. Part (B) contains a heat map of the unique minimizing flow for the combined healthy and ulcerative colitis samples. This heat map is scaled logarithmically for visualization purposes. Part (C) depicts the differential abundance vector between the combined healthy and Ulcerative Colitis samples and indicate which organisms are differentially abundant in the samples, demonstrating usefulness over the PCoA plot in part (A).}
%Visualization of the flow and differential abundance vectors at the phylum level between ulcerative colitis and healthy patients. Part a) contains a heat map of the unique minimizing flow, scaled logarithmically for visualization purposes. Parts b) and c) depict the normalized differential abundance vectors and indicate which phyla contributed to the Unifrac distance and by what percentage.}\label{fig:RealFlow}
\label{fig:RealFlow}
\end{figure}

\subsection{Speed comparison to Unifrac}
As modern comparative metagenomics studies often perform all pair-wise Unifrac distance computations for datasets consisting of tens to thousands of samples, it is important to compute such distances in an efficient manner. We show in Theorem \ref{thm:speed} below that our Algorithm \ref{alg:EMDUnifrac} to compute EMDUnifrac runs in space and time complexity linear in the total support of the input vectors (so less than or equal to the number of nodes in the tree). To assess practical performance of Algorithm \ref{alg:EMDUnifrac}, we compared it to the fastest previous implementation of Unifrac, called FastUnifrac \cite{hamady2010fast}. We randomly generated trees (using the ete2 toolkit \cite{ete3}) with the number of leaf nodes ranging from 10 to 90,000. We then randomly produced pairs of distributions on the leaves using an exponential distribution with scale parameter 1. Importantly, EMDUnifrac can handle distributions with weights on leaf nodes as well as internal nodes while FastUnifrac only allows distributions with weights on the leaf nodes. We performed 10 replicates for each number of tree leaves and 10 replicates for each tree topology. Using the same fixed computational resources, we then ran FastUnifrac, EMDUnifrac in a mode that computes and returns the computed flow, and EMDUnifrac in a mode that just calculates the distance (and does not return an optimal flow, returning identical output to FastUnifrac). The average timings (over each number of tree leaves) are depicted in Figure \ref{fig:timing}. These results indicate that in either mode, EMDUnifrac is more computationally efficient than FastUnifrac, and when just the resulting distance is desired, \mbox{EMDUnifrac} takes less than half a second to run, even on trees with 90,000 leaves (noting that our implementation is a non-optimized, Python implementation). 
\begin{figure}[!hb]
\centerline{\includegraphics[width=5in,trim={0cm 0cm 0cm 0cm},clip]{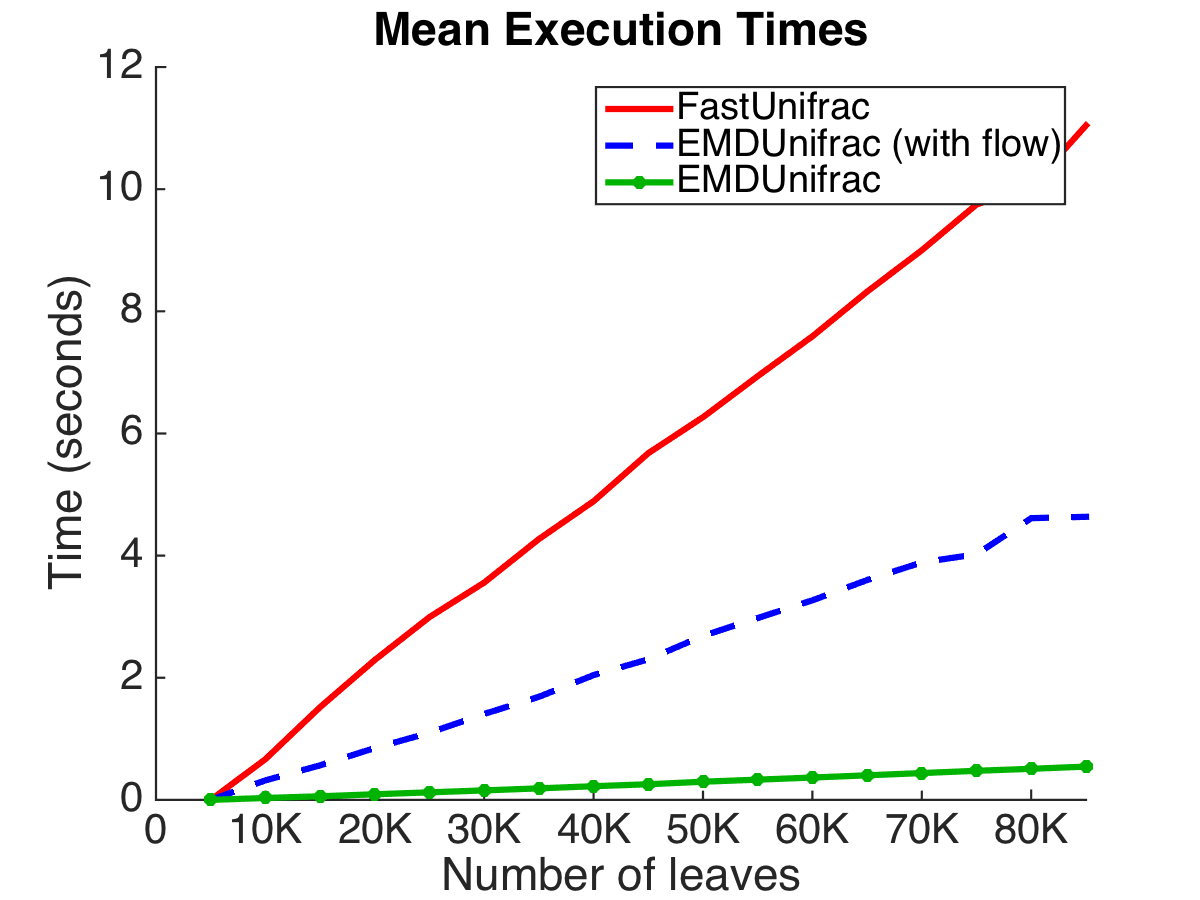}}
\caption{Speed comparison of FastUnifrac to EMDUnifrac (while also returning the minimizing flow) and EMDUnifrac (while returning just the distance). Trees are generated with random topology and abundances are random realizations of an exponential distribution and are supported on the leaves.}
\label{fig:timing}
\end{figure}

%To describe the space of all ways in which the two samples can differ from each other, let $\Gamma(P,Q)$ 

%As Unifrac is typically utilized for metagenomic and metatranscriptomic studies, 

\section{Proof of correctness}
In this section, we detail our algorithm to compute EMDUnifrac, prove its correctness, and assess its computational complexity.
\subsection{Definitions and algorithm}

We begin with some definitions. Let $P$ and $Q$ be probability distributions on a tree $T$ with distance matrix $D_{i,j}$ and edge set $E$. Recall that $\Gamma(P,Q)$ is the set of all flows from $P$ to $Q$ in $T$. By an abuse of notation, we write $i \in T$ to denote a vertex of our tree. For such a vertex $i \in T$ we will say $i$ is a \textit{source} if $P_i \geq Q_i$ and say $i$ is a \textit{sink} otherwise. Let $T_{source}$ and $T_{sink}$ denote the sets of sources and sinks, respectively.

Next, we select an arbitrary vertex of $T$ and distinguish it as the root $\rho$ of $T$. The choice is a convenience of notation. For each $i \in T$ let $a(i)$ be the unique neighbor of $i$ in $T$ which lies on the path from $i$ to $\rho$ in $T$. Thus the edges of $T$ are determined by the set of ordered pairs $(i,a(i)))$ for $i \in T$. Let $e_i$ denote the edge $(i,a(i))$. As $T$ is a tree, each edge $e \in E$ is a bridge. Thus its removal partitions the vertices into two disjoint subsets. We denote the subset containing $\rho$ by $T_e$ and the other by $T'_e$. Let $l: E \rightarrow \mathbb{R}_{\geq0}$ define a set of edge weights or lengths on $E$. For $i,j \in T$, define $\pi(i,j)$ to be the set of edges comprising the unique minimal path from $i$ to $j$ in $T$ and let $D_{i,j}= \sum_{e \in \pi(i,j)}{l(e)}$ be the distance from $i$ to $j$ in $T$.

The pseudocode for EMDUnifrac is contained in Algorithm \ref{alg:EMDUnifrac}. Intuitively, the algorithm begins at the leaves of the tree and ``pushes" mass toward the root; satisfying the sources and sinks for each subtree encountered in the progression. The matrix $G$ tracks the mass still needed to be moved to or from each vertex by the algorithm, while the vector $w$ tracks the length of paths traversed by mass at each step.

To implement EMDUnifrac, we first choose an ordering on the set of vertices of $T$ such that for $i,j \in T$, $i$ is an element of the path from $j$ to $\rho$ only if $i \geq j$. A natural such ordering is defined by partitioning the vertices of $T$ by the number of edges in the path to $\rho$, and then ordering vertices such that increasing indices correspond to decreasing path lengths to $\rho$.

We then let $G$ and $M$ be a pair of matrices whose rows and columns are indexed by the vertices of $T$ with respect to an ordering as above. Let $G_{i,\cdot}$ denote the $i$-th row of the matrix $G$. Initialize both $G$ and $M$ to be the zero matrix. Let $w$ be a vector indexed by the vertices of $T$, initialized to be the zero vector. For any vector $u$, define ${\rm skel}(u)$ to be the binary vector of the same dimension as $u$ such for all $i$, ${\rm skel}(u(i)) = 1$ if $u(i) \neq 0$ and ${\rm skel}(u(i)) = 0$ otherwise.

\label{section:Algorithm}
\begin{algorithm}[ht!]
\caption{: ${\rm EMDUnifrac}$\label{alg:EMDUnifrac}}
\begin{flushleft}
\hspace{0ex}\mbox{\textit{Input:}}
\begin{algorithmic}
\STATE $P,Q,\rho, T, E = \{i,a(i)\} \mbox{ for } i\in T, l$
\end{algorithmic}
\hspace{0ex}\mbox{\textit{Initialization:}}
\begin{algorithmic}
\STATE $M,G = \mathbf{0}$
\STATE ${\rm EMDUnifrac}(P,Q) = 0$
\STATE ${\rm DiffAbund} = \vec{0}$
\end{algorithmic}
\hspace{0ex}\mbox{\textit{Iterations:}}
\begin{algorithmic}[1]
\FOR{$i = 1,..., |T|$}
\STATE $M_{i,i} = \min(P_i,Q_i)$
\STATE $G_{i,i} = P_i - Q_i$
\FOR{$j$ such that $G_{i,j} > 0$}
\FOR{$k$ such that $G_{i,k} < 0$}
\STATE $M_{j,k} = \min(G_{i,j},-G_{i,k})$
\STATE $G_{i,j} = G_{i,j} - M_{j,k}$
\STATE $G_{i,k} = G_{i,k} + M_{j,k}$
\STATE ${\rm EMDUnifrac}(P,Q) = {\rm EMDUnifrac}(P,Q) + (w_j+w_k) M_{j,k}$
\ENDFOR
\ENDFOR
\STATE $G_{a(i),\cdot} = G_{a(i),\cdot} + G_{i,\cdot}$
\STATE ${\rm DiffAbund}({(i,a(i))}) = l(i,a(i))\sum_{t\in T}{G_{i,t}}$
\STATE $G_{i,\cdot} = \vec{0}$
\STATE $w = w + l(i,a(i)){\rm skel}(G_{i,\cdot})$
\ENDFOR
\end{algorithmic}
\hspace{0ex}\mbox{\textit{Output:}}
\begin{algorithmic}
\STATE $M$, ${\rm EMDUnifrac}(P,Q)$ and ${\rm DiffAbund}$
\end{algorithmic}
\end{flushleft}
\end{algorithm}

\subsection{Proof of correctness}

We first prove an alternate characterization of the earth movers distance for probability distributions on a tree $T$. 

%\textbf{Lemma 1:}
\begin{lemma}
 We have that $${\rm EMDUnifrac}(P,Q) = \min_{M \in \Gamma(P,Q)}{\sum_{e \in E}\sum_{i\in{T_e}}\sum_{j\in {T'_e}}{l(e)\left(M_{i,j}+M_{j,i}\right)}}.$$
\end{lemma}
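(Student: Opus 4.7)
The plan is to show that the two objective functions being minimized over $\Gamma(P,Q)$ are in fact the same function of $M$; then the minima agree trivially, and no optimization argument is needed.

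First, I would expand the distance $D_{i,j}$ using its definition as a sum over edges of the unique path between $i$ and $j$:
\begin{equation*}
D_{i,j} = \sum_{e \in \pi(i,j)} l(e) = \sum_{e \in E} l(e)\,\mathbbm{1}[e \in \pi(i,j)].
\end{equation*}
Substituting into the original definition \eqref{eq:EMD} and swapping the order of summation gives
\begin{equation*}
\sum_{i,j \in T} D_{i,j} M_{i,j} = \sum_{e \in E} l(e) \sum_{i,j \in T} \mathbbm{1}[e \in \pi(i,j)]\, M_{i,j}.
\end{equation*}

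The key step is the tree-theoretic observation that for any edge $e \in E$, removing $e$ from $T$ disconnects it into the two components $T_e$ and $T'_e$, and $e$ lies on $\pi(i,j)$ if and only if $i$ and $j$ lie in opposite components. Hence
\begin{equation*}
\mathbbm{1}[e \in \pi(i,j)] = \mathbbm{1}[i \in T_e, j \in T'_e] + \mathbbm{1}[i \in T'_e, j \in T_e],
\end{equation*}
which partitions the pairs since the diagonal $i=j$ contributes nothing to $\pi(i,j)$. Plugging this in and relabeling indices $(i,j)\leftrightarrow(j,i)$ in the second term yields
\begin{equation*}
\sum_{i,j \in T} D_{i,j} M_{i,j} = \sum_{e \in E} l(e) \sum_{i \in T_e} \sum_{j \in T'_e} \bigl(M_{i,j} + M_{j,i}\bigr).
\end{equation*}

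Since this equality holds pointwise for every $M \in \Gamma(P,Q)$, minimizing either side over $\Gamma(P,Q)$ yields the same value, which is exactly the claim. I do not anticipate any real obstacle here beyond carefully justifying the identity $\mathbbm{1}[e \in \pi(i,j)] = \mathbbm{1}[i \in T_e, j \in T'_e] + \mathbbm{1}[i \in T'_e, j \in T_e]$, which follows immediately from the fact that every edge in a tree is a bridge.
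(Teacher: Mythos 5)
Your proposal is correct and follows essentially the same route as the paper's proof: expand $D_{i,j}$ via the indicator of the path, interchange the order of summation, and use the bridge property of tree edges to split the pairs $(i,j)$ into the two components $T_e$ and $T'_e$, obtaining a pointwise identity of the objectives before minimizing. No substantive differences to report.
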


%\textbf{Proof:} 
\begin{proof}
Let $1_{\pi(i,j)}(e):E \rightarrow \{0,1\}$ be the indicator function of the path from $i$ to $j$ in $T$. That is, $1_{\pi(i,j)}(e) = 1$ if $e$ is an edge in the path from $i$ to $j$ and is $0$ otherwise. We then have that for any flow $M \in \Gamma(P,Q)$
\begin{align}
\sum_{i,j \in T}{D_{i,j}M_{i,j}} &= \sum_{i\in T}\sum_{j\in T}\left(\sum_{e\in E}{l(e)1_{\pi(i,j)}(e)}\right)M_{i,j} \label{lem1_1}\\
	&= \sum_{e\in E}\sum_{i\in T}\sum_{j \in T}{l(e)1_{\pi(i,j)}(e)}M_{i,j}\label{lem1_2}\\
	&= \sum_{e\in E}\:\sum_{\substack{i\in \\ T_e \cup T'_e}}\:\sum_{\substack{i\in \\ T_e \cup T'_e}}{l(e)1_{\pi(i,j)}(e)}M_{i,j}\label{lem1_3}\\
	&= \sum_{e\in E}\left(\sum_{i\in T_e}\sum_{j \in T'_e}l(e)M_{i,j}+\sum_{i\in T'_e}\sum_{j \in T_e}l(e)M_{i,j}\right)\label{lem1_4}\\
	&= \sum_{e\in E}\sum_{i\in T_e}\sum_{j \in T'_e}l(e)\left(M_{i,j}+M_{j,i}\right).\label{lem1_5}
\end{align}

The above equalities are justified as follows. To begin, (\ref{lem1_1}) follows from the definition of the distance function and the use of the characteristic function of the path between vertices to expand the summation over all edges of the graph. Next, (\ref{lem1_2}) and (\ref{lem1_3}) reorder the summation and express the vertex set in terms of the partitions defined above by edge deletion. We have that $1_{\pi(i,j)}(e)=1$ if and only if the vertices $i$ and $j$ belong to distinct partitions $T_e$ and $T'_e$, from which (\ref{lem1_4}) follows. Finally, in (\ref{lem1_5}) we condense the summation notation by reordering the last sum and grouping terms. Taking the minimum over all $M \in \Gamma(P,Q)$ yields the earth mover's distance on the left hand side, and thus the desired result is obtained. %$\blacksquare$
\end{proof}

Next, we prove a lower bound on the summands involved in the above definition of the earth mover's distance.

%\textbf{Lemma 2:}
\begin{lemma}
 For any flow $M \in \Gamma(P.Q)$ and any $e \in E$ we have that $$\sum_{i\in T_e}\sum_{j \in T'_e}l(e)(M_{i,j}+M_{j,i}) \geq l(e)\left|\sum_{i \in T_e}P(i)-Q(i)\right|.$$ Further, the differential abundance vector, indexed by the edges of $T$ and having entries ${\rm DiffAbund}_e = l(e)\sum_{i \in T_e}\sum_{j \in T_{e'}} M_{i,j}-M_{j,i}$ is unique, regardless of the minimizing flow $M$.
 \end{lemma}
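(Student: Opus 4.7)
The plan is to exploit the defining marginal constraints of a flow $M \in \Gamma(P,Q)$, namely that the row sums of $M$ equal $P$ and the column sums equal $Q$. Using these, I would first establish the following identity, valid for \emph{any} flow (not merely a minimizing one):
\begin{equation*}
\sum_{i\in T_e}\sum_{j\in T'_e}\bigl(M_{i,j}-M_{j,i}\bigr) \;=\; \sum_{i\in T_e}\bigl(P(i)-Q(i)\bigr).
\end{equation*}
To see this, write $\sum_{i\in T_e}P(i) = \sum_{i\in T_e}\sum_{j\in T}M_{i,j}$ and $\sum_{i\in T_e}Q(i) = \sum_{i\in T_e}\sum_{j\in T}M_{j,i}$, split each inner sum according to the partition $T = T_e \cup T'_e$, and note that the ``internal'' contribution $\sum_{i,j \in T_e} M_{i,j}$ appears identically in both expansions and hence cancels upon subtraction.

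This identity does two things at once. First, since its right-hand side depends only on $P$, $Q$, and the edge $e$ (never on the particular flow $M$), multiplying through by $l(e)$ shows that
\begin{equation*}
{\rm DiffAbund}_e \;=\; l(e)\sum_{i \in T_e}\bigl(P(i)-Q(i)\bigr),
\end{equation*}
which is manifestly independent of the choice of minimizing flow. This settles the uniqueness claim and in fact proves a stronger statement: the quantity is invariant over \emph{all} flows in $\Gamma(P,Q)$, optimal or not.

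Second, taking absolute values and applying the triangle inequality yields
\begin{equation*}
\left|\sum_{i\in T_e}\bigl(P(i)-Q(i)\bigr)\right| \;=\; \left|\sum_{i\in T_e}\sum_{j\in T'_e}\bigl(M_{i,j}-M_{j,i}\bigr)\right| \;\leq\; \sum_{i\in T_e}\sum_{j\in T'_e}\bigl|M_{i,j}-M_{j,i}\bigr|.
\end{equation*}
Because every entry of $M$ is nonnegative (flows have nonnegative mass), $|M_{i,j}-M_{j,i}| \leq M_{i,j}+M_{j,i}$ termwise, and multiplying by $l(e) \geq 0$ gives the claimed lower bound.

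There is no real obstacle here; the proof is essentially bookkeeping with the marginal constraints. The only subtlety worth flagging is recognizing that the signed quantity $\sum_{i \in T_e}\sum_{j \in T'_e}(M_{i,j}-M_{j,i})$ telescopes to a purely marginal quantity via the flow constraints, which is precisely what makes ${\rm DiffAbund}_e$ a well-defined, flow-independent invariant and simultaneously supplies the sharp lower bound.
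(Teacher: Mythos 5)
Your proposal is correct and follows essentially the same route as the paper: both expand $\sum_{i\in T_e}(P(i)-Q(i))$ via the marginal constraints of $M$, cancel the internal $T_e\times T_e$ contribution, identify ${\rm DiffAbund}_e$ with the flow-independent quantity $l(e)\sum_{i\in T_e}(P(i)-Q(i))$, and finish with the triangle inequality plus nonnegativity of the entries of $M$. Your write-up is if anything slightly more explicit than the paper's at the final inequality step (separating $|\sum(\cdot)|\leq\sum|\cdot|$ from $|M_{i,j}-M_{j,i}|\leq M_{i,j}+M_{j,i}$), but the argument is the same.
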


%\textbf{Proof:} 
\begin{proof}
We have that 
\begin{align}
l(e)\left|\sum_{i \in T_e} P_i-Q_i\right| &= \left|l(e)\sum_{i \in T_e}\left(\sum_{j \in T}M_{i,j}-\sum_{j \in T}M_{j,i}\right)\right|\label{lem2_1}\\
	&= \left|\sum_{i \in T_e}l(e)\sum_{j \in T}M_{i,j}-M_{j,i}\right|\label{lem2_2}\\
	&= \left|\sum_{i \in T_e}\left(\sum_{j \in T_e}l(e)(M_{i,j}-M_{j,i})+\sum_{j \in T'_e}l(e)(M_{i,j}-M_{j,i})\right)\right|\label{lem2_3}\\
		&= \left|\sum_{i \in T_e}\sum_{j \in T_e}l(e)(M_{i,j}-M_{j,i})+\sum_{i \in T_e}\sum_{j \in T'_e}l(e)(M_{i,j}-M_{j,i})\right|\label{lem2_4}\\
	&= \left|\sum_{i \in T_e}\sum_{j \in T'_e}l(e)(M_{i,j}-M_{j,i})\right|\label{lem2_5}\\
	&\leq \sum_{i \in T_e}\sum_{j \in T'_e}l(e)(M_{i,j}+M_{j,i})\label{lem2_6}.
\end{align}

Equations (\ref{lem2_1}) and (\ref{lem2_2}) above follow from expanding $P_i$ and $Q_i$ in terms of the row and column sums of $M$. Equations (\ref{lem2_3}) and (\ref{lem2_4}) reorganize the inner sums by way of the partitions $T_e$ and $T'_e$ and then group terms. Next we note that $\sum_{i \in T_e}\sum_{j \in T_e}l(e)(M_{i,j}-M_{j,i}) = 0$, as each term $M_{i,j}$ occurs precisely twice, once with each sign, which is reflected in (\ref{lem2_5}) above. This line also demonstrates the uniqueness of ${\rm DiffAbund}_e$, as the quantity is here shown to be equal to $\sum_{i \in T_e} P_i-Q_i$, which depends on the distributions $P$ and $Q$. Finally, we apply the triangle inequality to yield our result. %$\blacksquare$
\end{proof}

What follows is a brief technical lemma used to prove that the matrix $M$ produced by\\ EMDUnifrac is indeed a flow.

%\textbf{Lemma 3:}
\begin{lemma}
 Let $m \in T$ be arbitrary. Then for all $n \in T$ such that $n$ is a vertex along the path from $m$ to $\rho$, when $i = n$ in the loop beginning at line 1 of Algorithm (\ref{alg:EMDUnifrac}) we have that one of the following hold:

If $m$ is a source, then at the beginning of line 4 of algorithm \ref{alg:EMDUnifrac} we have that 

\begin{align*}
P_m &= G_{n,m} + \sum_{k \in T} M_{m,k}\\
Q_m &= \sum_{k_ \in T} M_{k,m}.
\end{align*} 

Alternately, if $m$ is a sink, then at the beginning of line 4 of Algorithm (\ref{alg:EMDUnifrac}) we have that

\begin{align*}
P_m &= \sum_{k \in T} M_{m,k} \\
Q_m &= -G_{n,m} + \sum_{k \in T} M_{k,m}.
\end{align*}
\end{lemma}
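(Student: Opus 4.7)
The plan is to prove the lemma by induction on $n$ along the path from $m$ to $\rho$, carried out together with the auxiliary invariant (A): throughout the algorithm, $G_{i,j} \neq 0$ only if $j$ lies in the subtree rooted at $i$. I would first verify that (A) is preserved by every step of the algorithm: line 3 only modifies $G_{i,i}$; the inner loop modifies only entries $G_{i,j}$ already in the support of $G_{i,\cdot}$; line 12 transmits the support of $G_{i,\cdot}$ to $G_{a(i),\cdot}$, whose subtree contains that of $i$; and line 14 zeros out the row. Combined with the vertex ordering (ancestors after descendants), (A) immediately implies that for every iteration $i$ preceding iteration $m$ one has $G_{i,m}=0$ at every point during iteration $i$, so no earlier execution of line 6 can have set $M_{m,k}$ or $M_{k,m}$.

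For the base case $n=m$, this observation yields $M_{m,k}=M_{k,m}=0$ for all $k$ and $G_{m,m}=0$ immediately before line 2 of iteration $m$. Lines 2 and 3 then set $M_{m,m}=\min(P_m,Q_m)$ and $G_{m,m}=P_m-Q_m$, and in the source case the desired identities reduce to $P_m=(P_m-Q_m)+Q_m$ and $Q_m=Q_m$, both trivially true; the sink case is symmetric.

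For the inductive step, I would fix an ancestor $n\neq m$ of $m$ and let $n_0$ be its immediate predecessor on the path from $m$ to $\rho$, so that $n=a(n_0)$. Assuming the lemma at iteration $n_0$, I would track $G_{\cdot,m}$, $\sum_k M_{m,k}$, and $\sum_k M_{k,m}$ through four stages: (i) the inner loop of iteration $n_0$, in which every execution of lines 6--8 that touches column $m$ transfers mass between $G_{n_0,m}$ and $\sum_k M_{m,k}$ when $m$ is a source (the role of $m$ is forced by $G_{n_0,m}\geq 0$), and dually between $-G_{n_0,m}$ and $\sum_k M_{k,m}$ when $m$ is a sink, so the two identities are preserved termwise; (ii) line 12 of iteration $n_0$, for which (A) guarantees $G_{n,m}=0$ beforehand (since $m$ belongs to no other child's subtree of $n$), so $G_{n_0,m}$ passes verbatim to $G_{n,m}$; (iii) the intermediate iterations processing other subtrees rooted at children of $n$, which by (A) never mention column $m$ in their inner loops and contribute nothing to $G_{n,m}$ at their line 12; and (iv) lines 2--3 of iteration $n$, which modify only row and column $n$ and hence (since $m\neq n$) leave $G_{n,m}$, $M_{m,\cdot}$, and $M_{\cdot,m}$ untouched. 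Stitching these four stages together carries the identities from $n_0$ to $n$.

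The main obstacle will be the bookkeeping in stage (iii): ruling out any indirect contribution from unrelated subtrees to the quantities associated with $m$. Invariant (A) is exactly the lever needed, and once it is in place the equations at $n$ drop out, since every change in $G_{n_0,m}$ during the inner loop is matched by an equal change in $\sum_k M_{m,k}$ or $\sum_k M_{k,m}$, while the propagation at line 12 is a bare equality between variables.
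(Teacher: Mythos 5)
Your proposal is correct and follows essentially the same route as the paper: induction along the path from $m$ to $\rho$, with the base case read off from lines 2--3 and the inductive step tracking how each change to $G_{\cdot,m}$ in the inner loop is compensated by an equal change to $\sum_k M_{m,k}$ (or $\sum_k M_{k,m}$) while the sign of $G_{\cdot,m}$ keeps the other sum fixed. Your auxiliary invariant (A) is a welcome refinement that makes explicit what the paper's proof leaves implicit, namely that iterations at vertices off the path from $m$ to $\rho$ never touch column $m$ of $G$ or row/column $m$ of $M$.
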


%\textbf{Proof:} 
\begin{proof}
This follows by induction. Suppose $m$ is a source and let $i=m$ in the loop at line 1 of Algorithm \ref{alg:EMDUnifrac}. Then $\min(P_m,Q_m) = Q_m$ and hence, by construction, $M_{m,m} = Q_m, G_{m,m} = P_m - Q_m$. Further, before beginning the loop at line 4 of Algorithm \ref{alg:EMDUnifrac}, every other entry of the $m$-th row of $M$ and $G$ are zero. This is because the elements of these rows are first potentially assigned non-zero values for $i=m$ in the midst of lines 6, 7 or 8. Thus at the beginning of line 4 of Algorithm \ref{alg:EMDUnifrac}, we have
 
\begin{align*}
P_m &= G_{m,m} + \sum_{k \in T} M_{m,k}, \\
Q_m &= \sum_{k \in T} M_{k,m}.
\end{align*}
Thus the claim holds for $i=m$.

Now suppose inductively that the above equalities holds when $i=j$ for some vertex $j \geq m$ on the path from $m$ to $\rho$ in $T$. We shall show the equalities holds for $i=a(j)$. As Algorithm \ref{alg:EMDUnifrac} proceeds in the loop at line 1 to the vertex for $i = a(j)$, we have that $G_{a(j),m} \geq 0$ and thus by line 5 of Algorithm \ref{alg:EMDUnifrac}, the $m$-th column of $M$ is left unchanged. Hence the sum $\sum_{k \in T} M_{k,m}$ remains unchanged.

Additionally, any change to $G_{a(j),m}$ during the loop at line 5 is compensated by a change to $\sum_{k \in T} M_{m,k}$, thus $$G_{a(j),m} + \sum_{k \in T} M_{m,k} = G_{j,m} + \sum_{k \in T} M_{m,k} = P_m.$$ Thus, inductively, the claims holds for all vertices along the path from $m$ to $\rho$ in $T$ and $m$ a source. Symmetric reasoning holds for the case of $m$ a sink.  %$\blacksquare$
\end{proof}

We now prove our main result.

%\textbf{Theorem:} 
\begin{theorem}
The EMDUnifrac algorithm in Algorithm \ref{alg:EMDUnifrac} produces the earth mover's distance $EMDUnifrac(P,Q)$ and a corresponding minimizing flow $M$.
\end{theorem}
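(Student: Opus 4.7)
My plan is to prove the theorem in two parts: first that the matrix $M$ returned by Algorithm \ref{alg:EMDUnifrac} is an element of $\Gamma(P,Q)$, and second that this $M$ attains the universal lower bound supplied by Lemma 2, making it a minimizing flow and showing that the accumulated value is the true earth mover's distance.

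For flow validity, I would apply Lemma 3 at every vertex $m \in T$, taking $n = \rho$ and propagating the induction up the path from $m$ to the root. The lemma leaves a slack of $\pm G_{\rho,m}$ between $P_m$, $Q_m$ and the row/column sums of $M$ at the start of line 4 in the root's iteration. The double loop on lines 4--11 greedily pairs positive with negative entries of $G_{\rho,\cdot}$ until one sign class is exhausted; because $\sum_m G_{\rho,m} = \sum_m(P_m-Q_m) = 0$ and each pairing decreases the positive and negative totals by the same amount, the two classes exhaust simultaneously. Hence $G_{\rho,m}=0$ for every $m$ after the loops, so Lemma 3 now yields exact equality of the row and column sums of $M$ with $P$ and $Q$, placing $M \in \Gamma(P,Q)$.

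For optimality, the key structural observation is that after the inner loops terminate at each iteration $i$, the surviving nonzero entries of $G_{i,\cdot}$ all share a common sign, since the greedy matching exhausts one sign class before stopping. Consequently, edge $e_i$ carries one-directional mass in the constructed flow: if the surviving entries are positive, then every nonzero $M_{j,k}$ with exactly one endpoint in $T'_{e_i}$ has its source $j$ in $T'_{e_i}$ and its sink $k$ in $T_{e_i}$ and never the reverse, because any sink in $T'_{e_i}$ would already have been exhausted at or below $i$; the negative case is symmetric. Thus for every edge $e$ one of $\sum_{j \in T'_e, k \in T_e}M_{j,k}$ or $\sum_{j \in T'_e, k \in T_e}M_{k,j}$ vanishes, saturating the triangle inequality used in the proof of Lemma 2. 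Summing over $e$ and invoking Lemma 1 gives $\sum_{i,j}D_{i,j}M_{i,j} = \sum_{e\in E}l(e)|\sum_{i \in T_e}(P_i-Q_i)|$, matching the lower bound and proving that $M$ is minimizing.

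It remains to verify that the running sum accumulated on line 9 equals $\sum_{j,k}D_{j,k}M_{j,k}$. I would maintain as an invariant that at the start of iteration $i$, for every $m$ with $G_{i,m}\neq 0$ the weight $w_m$ equals the length of the path from $m$ to $i$ in $T$; this is preserved by line 15, interpreted as adding $l(i,a(i))$ to $w_m$ for every $m$ still in transit after line 12. Combined with the observation that any nonzero $M_{j,k}$ is assigned at the iteration equal to the least common ancestor of $j$ and $k$ in $T$ -- since $j$ and $k$ first appear simultaneously in a common $G$-row exactly at their LCA and must still be available there for the assignment to occur -- we obtain $w_j + w_k = D_{j,k}$ at the moment $M_{j,k}$ is set, so line 9 correctly accumulates the total flow cost. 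I expect the main obstacle to be the one-sign property together with its consequences, since it simultaneously underpins the LCA identification, the $w$-invariant, and the equality case of Lemma 2; carefully sequencing the induction over vertices in the prescribed processing order, so that these three invariants are established without circular reasoning, is the principal technical challenge.
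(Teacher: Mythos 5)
Your proposal follows essentially the same route as the paper's proof: flow validity via Lemma 3 at the root together with $\sum_m G_{\rho,m}=0$ forcing simultaneous exhaustion of both sign classes, and optimality via the one-sign property of each surviving row $G_{i,\cdot}$, which makes every edge carry unidirectional mass and saturates the bound of Lemma 2. Your third component --- verifying via the $w$-invariant and the LCA observation that the value accumulated on line 9 actually equals $\sum_{j,k}D_{j,k}M_{j,k}$ --- is a genuine addition: the paper's proof establishes only that $M$ is a minimizing flow and never checks that the returned scalar matches its cost (indeed, as written, line 15 updates $w$ with ${\rm skel}(G_{i,\cdot})$ \emph{after} line 14 has zeroed that row, so your reading of the intended update order is also a necessary correction). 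This makes your argument more complete than the paper's on the ``produces the earth mover's distance'' half of the claim.
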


%\textbf{Proof:}
\begin{proof}
 We first show that $M$ is indeed a flow. Upon the algorithm reaching the root $\rho$, that is when $i=|T|$ in line 4 of Algorithm \ref{alg:EMDUnifrac}, we have traversed every vertex of $T$, so that 

\begin{align}
0 &= 1-1\label{prf1_1}\\
  &= \sum_{k \in T} P_k - Q_k \label{prf1_2}\\
	&= \sum_{k \in T_{source}}P_k - Q_k + \sum_{k \in T_{sink}}P_k - Q_k\label{prf1_=3}\\
	&= \sum_{k \in T_{source}}\left(G_{\rho,k} + \sum_{l \in T} M_{k,l} - \sum_{l \in T}M_{l,k}\right) + \sum_{k \in T_{sink}}\left(\sum_{l \in T} M_{k,l} - \left(-G_{\rho,k} + \sum_{l \in T}M_{l,k}\right)\right)\label{prf1_4}\\
	&= \sum_{k \in T }\sum_{l \in T}M_{l,k} - \sum_{k \in T }\sum_{l \in T}M_{k,l} + \sum_{k \in T_{source}}G_{\rho,k} + \sum_{k \in T_{source}}G_{\rho,k}\label{prf1_5}\\
	&= \sum_{k \in T} G_{\rho,k}. \label{prf1_6}
\end{align}
The above equalities are justified as follows. In (\ref{prf1_4}) we expand the terms $P_k$ and $Q_k$ in terms of the matrices $G$ and $M$, as shown in Lemma $3$, since $\rho$ is an element of the path from any vertex to $\rho$. We then group terms in (\ref{prf1_5}) and (\ref{prf1_6}) by repeatedly using that $T_{source} \cup T_{sink} = T$, before canceling the symmetric summations of the elements of $M$. 

It then follows that the sum of the positive elements of $G_{\rho,\cdot}$ is equal to the sum of the negative elements of $G_{\rho,\cdot}$, and thus, by construction of the loops at lines 4 and 5 of Algorithm \ref{alg:EMDUnifrac}, the algorithm must terminate with $G_{\rho,\cdot}$ identically zero. As we still have that for each $i \in T$, $P_i = \sum_{k \in T} M_{j,k}, Q_i = \sum_{k \in T} M_{k,j}$, up to the addition or subtraction of $G_{\rho,i}=0$, $M$ must be a flow.

Now we show that $M$ minimizes the sum defining the earth mover's distance. By Lemmas $1$ and $2$, it suffices to show that $\sum_{i \in T_e}\sum_{j \in T'_e}l(e)(M_{i,j}+M_{j,i}) = |\sum_{i \in T_e}P_i-Q_i|$ for all $e \in E$. Given the ordering of the vertices chosen for the algorithm above, let $n \in T-\{\rho\}$ be arbitrary. To begin, we make some observations regarding the structure of the matrix $G$ and its relationship to $M$ in the algorithm. Note, that by construction, at the termination of the loop at line 4 of Algorithm \ref{alg:EMDUnifrac} for $i=n$, the entries of $G_{n,\cdot}$ all have the same sign, as the the loops at lines 4 and 5 have the effect of pairwise choosing elements of opposite signs and using one to eliminate the other. This process terminates when elements of one or the other sign are exhausted. Second, note that for $k \in T'_{e_n}$ and $m > n$, either $G_{m,k} = 0$ or has the same sign as $G_{n,k}$, as any change to the entries of $G_{\cdot, k}$ is made to move the value toward zero by a quantity bounded by the magnitude of the entry. This again follows from examination of the inner most loop of the algorithm, as well as the evolution of rows of $G$. Finally, note that across all $i \in T'_{e_n}, j \in T_{e_n}$ either $M_{j,i} = 0$ or $M_{j,i} = 0$. This follows since $M_{i,j}$, respectively $M_{j,i}$, is only assigned a non-zero value in the case of $G_{m,i} > 0$, respectively $G_{m,i} < 0$. By the above observation regarding the signs of the elements of $G_{n,\cdot}$, only one of these conditions holds across $i,j$.

Now, without loss of generality, assume $$\left|\sum_{i \in T_{e_n}} P_i - Q_i\right| = \sum_{i \in T_{e_n}} P_i - Q_i$$ as the argument for the alternate case is analogous. We then have that 
\begin{align}
\left|\sum_{i \in T_{e_n}} P_i - Q_i\right|  &= \sum_{i \in T_{e_n}} P_i - Q_i\\
	&= \sum_{i \in T_{e_n}} \sum_{j \in T} M_{i,j} - M_{j,i}\\
	&= \sum_{i \in T_{e_n}} \sum_{j \in T'_{e_n}} M_{i,j} - M_{j,i}\label{prf2_3}\\
	&= \sum_{i \in T_{e_n}} \sum_{j \in T'_{e_n}} M_{i,j} + M_{j,i}.\label{prf2_4}
\end{align}
The change of sign in moving from (\ref{prf2_3}) to (\ref{prf2_4}) follows from the above observation that at least one of $M_{i,j}$ or $M_{j,i}$ must be identically zero, and that the sum must be non-negative. Hence $-M_{j,i} = 0 = M_{j,i}$. Scaling the above equality by $l({e_n})$ yields $$\left|\sum_{i \in T_{e_n}} P_i - Q_i\right| = \sum_{i \in T_{e_n}} \sum_{j \in T'_{e_n}} M_{i,j} + M_{j,i}.$$ Having achieved the lower bound established in Lemma 2, we must have that the flow $M$ is a minimizer for the sum defining ${\rm EMDUnifrac}(P,Q)$. %$\blacksquare$
\end{proof}

%\textbf{Theorem:}
\begin{theorem}
\label{thm:speed}
 Let $|{\rm supp }\ P|,|{\rm supp }\ Q|$ denote the number of elements in the support of the probability distributions $P$ and $Q$, respectively. Let $s = |{\rm supp }\ P|+|{\rm supp }\ Q|$. Then the EMDUnifrac algorithm has time and space complexity $O(s)$.
 \end{theorem}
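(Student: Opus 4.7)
The plan is to bound both space and time by $O(s)$ through a token-counting argument on the non-zero entries of $G$ and $M$, under the sparse implementation that the pseudocode suggests: only non-zero entries of $G$ and $M$ are stored and iterated over, and outer-loop vertices $i$ with $P_i = Q_i = 0$ and $G_{i,\cdot} \equiv \mathbf{0}$ are skipped as no-ops.

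For the space bound, I would show that the total number of non-zero entries of $G$ is at most $s$ at every point in the execution. A non-zero entry of $G$ is first created only by the initialization $G_{i,i} = P_i - Q_i$, which is non-zero precisely when $i \in \mathrm{supp}(P) \cup \mathrm{supp}(Q)$ with $P_i \neq Q_i$: at most $s$ creations in total. Each execution of the inner body strictly drives one of $G_{i,j}$ or $G_{i,k}$ to zero, so the cancellation loops only destroy non-zero entries. The row-merge $G_{a(i),\cdot} \mathrel{+}= G_{i,\cdot}$ followed by $G_{i,\cdot} \gets \mathbf{0}$ transfers entries from row $i$ to row $a(i)$ without changing their count, and no other statement touches $G$. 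A parallel accounting bounds the number of non-zero entries of $M$: its diagonal entries are non-zero only at support vertices (at most $s$), and each off-diagonal entry $M_{j,k}$ created in the inner body is charged to a distinct destruction event in $G$, of which there are at most $s$. The vector $w$ and the edge-indexed vector $\mathrm{DiffAbund}$ inherit sparsity for the same reason, so the total space is $O(s)$.

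For the time bound, each execution of the inner body is $O(1)$ and is paid for by one destruction of a non-zero entry in $G$, so the cumulative inner-loop work is $O(s)$. For the outer loop, I would observe that when $P_i = Q_i = 0$ and $G_{i,\cdot} \equiv \mathbf{0}$ at the start of iteration $i$, the entire body collapses to no-ops: the inner loops do not fire, the row-merge adds zero, $\mathrm{DiffAbund}(e_i) = 0$, and $w$ is unchanged. Restricting the outer loop to the active vertices lying on the minimal rooted subtree $T^* \subseteq T$ that spans $\mathrm{supp}(P) \cup \mathrm{supp}(Q) \cup \{\rho\}$, with degree-two internal pass-through vertices contracted into single edges whose lengths are summed, gives $|T^*| = O(s)$ by the standard fact that the topological Steiner tree on $s$ marked vertices of a rooted tree has size $O(s)$. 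The per-iteration outer work at an active vertex is $O(1 + |\text{non-zero entries moved}|)$, and the latter charges again against the $O(s)$ token budget.

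The main obstacle is bridging the gap between Algorithm~\ref{alg:EMDUnifrac} as stated — which notionally indexes $G$ and $M$ over all of $|T| \times |T|$ and runs the outer loop from $1$ to $|T|$ — and the sparse realization required for the $O(s)$ bound. I would address this by rewriting the algorithm in terms of hash-maps keyed by vertex pairs, paired with a topological traversal restricted to $T^*$ (matching the reference Python implementation cited in the paper), and then verifying that the sparse version agrees with the pseudocode on every non-zero coordinate while acting as the identity elsewhere. Once that equivalence is in place, the bookkeeping above yields both $O(s)$ bounds simultaneously.
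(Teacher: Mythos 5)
Your proposal is correct and rests on the same core argument as the paper's proof: the inner cancellation loop can fire at most $s$ times because each firing permanently satisfies (your ``destroys a token of'') one of the $s$ sources or sinks, and the sparsity of $M$, $G$, and $w$ follows from the same count. Where you go beyond the paper is in handling the outer loop: the paper's proof bounds only the work done by the loop at line 5 and is silent on the fact that, as written, the loop at line 1 iterates over all $|T|$ vertices, so the literal pseudocode runs in $O(|T|+s)$ time rather than $O(s)$. Your restriction to the contracted Steiner subtree spanning ${\rm supp}(P)\cup{\rm supp}(Q)\cup\{\rho\}$, together with the observation that inactive vertices are no-ops, is exactly the patch needed to make the $O(s)$ claim honest, and it matches what a sparse implementation actually does. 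One small caution: the ${\rm DiffAbund}$ vector, indexed by the \emph{original} edges of $T$, can have a non-zero entry on every edge of the uncontracted Steiner subtree (a long path of degree-two vertices carrying mass contributes one non-zero entry per edge), so ``inherits sparsity'' holds only for the contracted representation you describe, not for the edge-indexed vector as defined in the paper; this caveat applies equally to the paper's own statement and does not affect the bounds for $M$, $G$, $w$, or the distance itself.
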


\begin{proof}
We first consider the time complexity of EMDUnifrac. Note that each iteration of the loop at line 5 of Algorithm \ref{alg:EMDUnifrac} has the effect of satisfying a source $i$ or sink $j$, that is, establishing the appropriate row sum $i$ or column sum $j$ of the matrix $M$. Further, the loop at line 5 only visits a pair of vertices $(i,j)$ in the case that both source $i$ and sink $j$ have not been satisfied, that is, that both $P(i) \neq \sum_{k\in T} M_{i,k}$ and $Q(i) \neq \sum_{k\in T} M_{k,i}$. As there are $s$ such row or column sums to satisfy, the loop at line 5 is evaluated at most $s$ times. Hence the time complexity of the algorithm is, in total, linear in $s$.

Now we examine the space requirements of EMDUnifrac. By the above, the matrix $M$ is sparse. That is, there are most $s$ evaluations of the loop at line 5 of Algorithm \ref{alg:EMDUnifrac} and thus, including the assignment of values to $M$ at line 2 of the algorithm, at most $2s$ non-zero entries in $M$. Additionally, line 3 of the algorithm assigns a non-zero entry to $G$ at most $n$ times, while line 12 has the effect of passing non-zero entries of $G$ from one row to another prior to being removed in line 13. Thus the number of non-zero entries of $G$ is bounded by $s$. Finally, the vector $w$ in Algorithm \ref{alg:EMDUnifrac} is one dimensional, having at most $s$ non-zero entries. Hence the total space requirements of the algorithm are also linear in $s$. 
\end{proof}

\section{Conclusion}

%\item Execution time of EMDUnifrac comparable to FastUnifrac

This paper implements the ideas of \cite{evans2012phylogenetic} to capitalize on the characterization of the Unifrac distance as the earth mover's distance on weighted phylogenetic trees. The EMDUnifrac algorithm developed, and proved correct, allows for extremely rapid computation of weighted and unweighted Unifrac distances between biological communities. In particular, computations times are much faster than FastUnifrac when producing identical outputs, as seen in Figure \ref{fig:timing}. These very rapid computation times and the minimal storage requirements, both linear in the number of taxa present, allow for all pairwise comparisons in large-scale studies. An example of this sort of implementation is seen in Figure \ref{fig:RealFlow}.

In addition to the Unifrac distance, EMDUnifrac is capable of producing both a minimizing flow and a differential abundance vector. The minimizing flow and differential abundance vector can be viewed as partitions of the numeric Unifrac distance, partitions which describe how operational taxonomic units present in biological communities contribute to their measured dissimilarity. The results shown in Figure \ref{fig:RealFlow} demonstrate an application in which the raw Unifrac value has less apparent discerning power than achieved by an analysis of the differential abundance vector.

Finally, EMDUnifrac algorithm is capable of computing the Unifrac distance for any weighted tree, not merely those trees weighted at their leaves. This allows for the comparison of whole genome shotgun metagenomes, an application in which weights are assigned at various levels of phylogenetic specificity. This is a capability apparently lacking in FastUnifrac, which combined with the ability to produce differential abundance vectors gives EMDUnifrac broader utility than current computational tools for measuring Unifrac distances.

The EMDUnifrac algorithm itself is an extension of the ideas presented in the \cite{mangul2015reference} which considered De Bruijn graphs. Both leverage the earth mover's distance to compute biologically relevant metrics on graphs. In EMDUnifrac, the topological benefits of a tree are exploited to speed computation in ways which are not possible under the more complicated topology of a De Bruijn graph.

%\begin{enumerate}
%\item Faster than FastUnifrac (MUCH faster when returning identical output). Facilitates all pairwise comparison of large-scale studies.
%\item Gives much more information (flow). Finally allows Unifrac to say \textit{how} communities differ, not just \textit{if} they differ.
%\item Flow can be used to identify which OTUs are responsible for differences between samples. Reference the DiffAbund vectors.
%\item First implementation of ideas in \cite{evans2012phylogenetic}.
%\item Mention WGS metagenomes and use for metagenomic profiles (i.e. can compare metagenomic profiles using EMDUnifrac, can't use FastUnifrac). This is due to being able to handle distributions that are supported on internal nodes as well as leaf nodes.
%\item Cite EMDeBruijn ACM-BCB Paper. Mention that this work is an extension of those ideas.
%\end{enumerate}

\section*{Acknowledgement}

\paragraph{Funding:} None.

%\bibliographystyle{natbib}
%\bibliographystyle{achemnat}
%\bibliographystyle{plainnat}
%\bibliographystyle{abbrv}
%\bibliographystyle{bioinformatics}
%
%\bibliographystyle{plain}
%
%\bibliography{Document}

\bibliographystyle{abbrv}
\bibliography{bibliography}

\end{document}